\newcommand{\ay}[1]{{#1}}
\newcommand{\jba}[1]{{#1}}
\newcommand{\induce}[2]{\mbox{$ #1 \langle #2 \rangle$}}
\newcommand{\dom}{\mbox{$\rightarrow$}}
\newtheorem{theorem}{Theorem}
\newtheorem{corollary}[theorem]{Corollary}
\newtheorem{lemma}[theorem]{Lemma}
\theoremstyle{definition}
\newtheorem{case}{Case}
\newtheorem{subcase}{Case}
\numberwithin{subcase}{case}
\newcommand{\defproblem}[3]{
  \vspace{3mm}
\noindent\fbox{
  \begin{minipage}{.95\textwidth}
  \begin{tabular*}{\textwidth}{@{\extracolsep{\fill}}lr} #1  \\ \end{tabular*}
  {\bf{Input:}} #2  \\
  {\bf{Question:}} #3
  \end{minipage}
}
}
\begin{document}
\bibliographystyle{plain}
\title{On the parameterized complexity of 2-partitions\thanks{Research supported by the Independent Research Fond Denmark under grant number DFF 7014-00037B.}
}
\author{
	J. B. Andersen\thanks{Department of Mathematics and Computer Science, University of Southern Denmark, Odense, Denmark (email: jonan15@student.sdu.dk)}
	\and J. Bang-Jensen\thanks{Department of Mathematics and Computer Science, University of Southern Denmark, Odense, Denmark (email: jbj@imada.sdu.dk)}
	\and A. Yeo\thanks{Department of Mathematics and Computer Science, University of Southern Denmark, Odense, Denmark (email: yeo@imada.sdu.dk)}}
	\maketitle

	\begin{abstract}
          We give an FPT algorithm for deciding whether the vertex set a digraph $D$ can be partitioned into two disjoint sets $V_1,V_2$ such that the digraph $D[V_1]$ induced by $V_1$ has a vertex that can reach all other vertices by directed paths, the digraph $D[V_2]$ has no vertex of in-degree zero and $|V_i|\geq k_i$, where $k_1,k_2$ are part of the input. This settles an open problem from \cite{bangTCS640,bangTCS795}.\\
          \noindent{}{\bf Keywords:} FPT algorithm, out-branching, 2-partition, directed graph, parameterized complexity.
	\end{abstract}

	\section{Introduction}
	A {\bf 2-partition} of a digraph $D=(V,A)$ is a partition $(V_1,V_2)$ of $V$ into disjoint sets. Let $\mathcal{P}_1,\mathcal{P}_2$ be two (di)graph properties. Then a {\bf $(\mathcal{P}_1,\mathcal{P}_2)$-partition} of a digraph $D$ is a 2-partition $(V_1,V_2)$ so that
	$D[V_i]$ has property $\mathcal{P}_i$ for $i=1,2$, where $D[V_i]$ is the subdigraph induced by $V_i$. A $(\mathcal{P}_1,\mathcal{P}_2)$-$[k_1,k_2]$-partition of a digraph is as above, but now we also require that $|V_i|\geq k_i$ for $i=1,2$. For example if $\mathcal{P}$ is the property of being acyclic, then the set of digraphs that allow a $(\mathcal{P},\mathcal{P})$-partition are exactly those digraphs that have dichromatic number at most 2. Recognizing such digraphs is NP-complete \cite{bokalJGT46}.

	Problems concerning the existence of certain 2-partitions of a given input digraph has received a lot of attention in the literature, see e.g. 
	\cite{bangTCS640,bangTCS636,bangTCS795,bensmailJCO30,federArXiv1907,kuehnJCT88,lichiardopolIJC2012,misraJCO24,stiebitzKAM,suzukiIPL33,vanthofTCS410}. In the papers \cite{bangTCS640,bangTCS636} the authors gave, for fixed $k_1,k_2$ (not part of the input), a complete complexity classification, in terms of being NP-complete or in XP, for the 120 $(\mathcal{P}_1,\mathcal{P}_2)-[k_1,k_2]$-partition problems corresponding to properties $\mathcal{P}_1,\mathcal{P}_2$ both being one of the following 15 properties:
		being strongly connected, being connected, minimum out-degree at least 1, minimum in-degree at least 1, minimum in- and out-degree at least 1, minimum degree at least 1, having an out-branching, having an in-branching, being acyclic, being complete, being oriented, being independent, being semicomplete, being a tournament and finally being symmetric. 
	They left open to characterize which of those problems that are polynomial actually admit an FPT algorithm. In \cite{bangTCS795} this was settled for the 23 polynomially solvable problems coming from both $\mathcal{P}_1$ and $\mathcal{P}_2$ being one of the following 8 properties: strongly connected, connected, having an out-branching, having an in-branching, having minimum degree at least 1, having minimum in- and out-degree at least 1, being acyclic, being complete. One problem that was left open in \cite{bangTCS795} was to determine the parameterized complexity of deciding the existence of a 2-partition $(V_1,V_2)$ with $|V_i| \ge k_i$ for $i = 1,2$  where the digraph $D[V_1]$ induced by $V_1$ has an out-branching and the digraph $D[V_2]$ has no vertex of in-degree zero.  We prove in this paper that this  problem is FPT.

	\section{Notation and preliminaries}

	The notation we use is consistent with that of \cite{bang2009}. 
	For a digraph $D=(V,A)$ we say that two vertices $u$ and $v$ are \textbf{adjacent} if at least one of the arcs $uv,vu$ is in $A$.
	For a set of vertices $U\subseteq V$, the \textbf{subdigraph induced by} $U$, denoted $D[U]$, is the digraph obtained from $D$ by deleting all vertices $V\setminus U$ and all arcs adjacent to those vertices. For digraph $D$ we denote by $|D| = |V(D)|$ the number of vertices in the graph. We use the same notation for paths and cycles, so $|P|$ is the number of vertices in the path $P$. Paths and cycles will always be directed. The {\bf girth} of a digraph is the length of a shortest directed cycle in $D$.
	A $(u,v)$-path is a directed path from $u\in V$ to $v\in V$ and the digraph $D$ is \textbf{strongly connected} if it contains a $(u,v)$-path for all ordered pairs of vertices $u,v\in V$.
	A {\bf strong component} of a digraph is a maximal induced subdigraph which is strong. A strong component is {\bf initial} if it has no entering arc in $D$.
	
	 The \textbf{underlying graph}, $U(D)$, of a digraph $D$ is the graph obtained from $D$ by replacing every 2-cycle by one edge and then suppressing all the directions of the other arcs. 
	 A digraph $D$ is \textbf{connected} if $U(D)$ is connected. 
	 For a vertex $u\in V$, we denote by $N(u)$ its \textbf{neighbours}, that is, the set of vertices that are adjacent to $u$. The \textbf{out-degree} 
$d^+_D(u)$ is the number of arcs going out of $u$ in $D$. Similarly, the \textbf{in-degree} $d^-_D(u)$ is the number of arcs going into $u$. We denote by $\delta^-(D)$ the minimum in-degree of a vertex in $D$. For a subset $X\subset V$, $N^+(X)$ denotes the set of \textbf{out-neighbours} of $X$ in $D$. 
	 
	
	An \textbf{out-tree} rooted in $s$ is a connected digraph $T_s^+$ such that $d^-_{T_s^+}(s)=0$ and $d^-_{T_s^+}(u)=1$ for all $u\in V(T_s^+)\setminus \{s\}$. An \textbf{out-branching}, $B^+$, of a digraph $D$ is an out-tree such that $V(T_s^+)=V(D)$.

	A digraph $D$ is \textbf{acyclic} if it contains no induced directed cycles, and it is \textbf{complete} if for every pair of vertices $u,v\in V$ induce a 2-cycle $uvu$. 
	
	A parameterized problem with parameter $k$ is in the complexity class XP if instances of size $n$ can be solved in time $O(f(k)n^{g(k)})$ for some pair of computable functions $f,g$. So if $k$ is fixed the problem can be solved in polynomial time. \ay{A problem} is Fixed Parameter Tractable (FPT) if it can be solved in time $O(f(k)n^c)$ for some constant $c$ and computable function $f$.
	
	\section{The $(B^+, \delta^- \ge 1)$-$[k_1,k_2]$-partition problem}
		The following is the problem whose complexity status we settle in this paper.\\
			
		\defproblem{{\sc $(B^+, \delta^- \ge 1)$-$[k_1,k_2]$-partition}}{A digraph $D=(V,A)$ and natural numbers $k_1,k_2$}{Is there a 2-partition $(V_1,V_2)$ of $V$ such that $D[V_1]$ has an out-branching \ay{and} $\delta^-(D[V_2])\geq 1$, \ay{where} $|V_i|\ge k_i$ for $i=1,2$?}\\

		In \cite{bangTCS640} the problem was shown to be polynomially solvable for every fixed pair of natural numbers $k_1,k_2$, but the algorithm has a running time $O(n^{f(k_1)})$ and hence is not an FPT algorithm.\\
	
		We begin with a few simple observations. The root of the out-branching is the only vertex that can possibly have in-degree 0 in a yes-instance. Therefore, if there are two or more vertices with in-degree 0 in $D$, it must be a no-instance.
			So in the following we assume the input is a digraph with at most one vertex with in-degree 0. For future reference, if there is such a vertex with in-degree 0 in the input $D$, we will refer to it as $r$, and in that case $r$ must be the root of the out-branching.
	
		Throughout the solution to the $(B^+, \delta^- \ge 1)$-$[k_1,k_2]$-partition problem, we will say that we {\bf grow} some subset of the vertices inside the graph. What we mean by grow is iteratively adding a vertex to the set that is an out-neighbour of a vertex in the set. Sometimes we want to limit this process to only growing the set to a certain size. We formalize the process in Algorithm \ref{grow}.
	
%
%
%
%

	\begin{algorithm}
		\caption{}\label{grow}
		\begin{algorithmic}
			\Procedure{grow}{$D = (V,A),\, S,\, k$} 

				\While{$|S| < k$ and $N^+(S) \not = \emptyset$}
					\State $v \gets$ any vertex in $N^+(S)$ 
					\State $S \gets S \cup \{v\}$
					
				\EndWhile 
				\State \Return $S$
			\EndProcedure
		\end{algorithmic}
	\end{algorithm}

	Note that if we don't want to limit the growth we can set $k = |V|$.
	
	\begin{lemma} \label{growpoly}
		Algorithm \ref{grow} runs in polynomial time.
	\end{lemma}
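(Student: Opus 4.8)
The plan is to bound two quantities separately: the number of iterations of the \textbf{while} loop, and the amount of work done in a single iteration. For the first, I would note that by the convention of \cite{bang2009} the set $N^+(S)$ consists of vertices lying \emph{outside} $S$, so the vertex $v$ picked in the loop body satisfies $v\notin S$; hence the assignment $S\gets S\cup\{v\}$ strictly increases $|S|$ by exactly one. Since the inclusion $S\subseteq V$ is maintained as an invariant and $|V|=n$, the loop body can be executed at most $n$ times (more precisely at most $\min\{k,n\}$ times, counting from the initial set), and in particular the procedure terminates.

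For the per-iteration cost, I would observe that each of the two loop-guard tests is cheap: checking $|S|<k$ is a single comparison, and deciding whether $N^+(S)=\emptyset$ together with extracting a witness $v\in N^+(S)$ amounts to scanning the out-arcs of the vertices of $S$ and discarding those that land inside $S$, which costs $O(n+|A|)$ time under the usual adjacency-list representation (using an auxiliary array or balanced search tree to test membership in $S$). Updating $S$ then costs $O(1)$ (or $O(\log n)$ with the search tree). Multiplying the $O(n+|A|)$ per-iteration bound by the $O(n)$ bound on the number of iterations yields an overall running time of $O\bigl(n(n+|A|)\bigr)=O(n^3)$, which is polynomial, as claimed.

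There is essentially no real obstacle in this proof; the only point that merits an explicit line is the monotonicity/termination argument — namely that $|S|$ strictly increases at every iteration so the loop cannot run forever — and this follows immediately from the fact that the chosen vertex $v$ is, by definition of $N^+(\cdot)$, not already in $S$.
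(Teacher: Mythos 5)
Your proof is correct and follows essentially the same route as the paper's: each iteration adds a vertex not already in $S$, so there are at most $n$ iterations, and each iteration's work (computing $N^+(S)$ and testing the guards) is polynomial. You simply spell out the per-iteration cost and the reliance on $N^+(S)\subseteq V\setminus S$ a bit more explicitly than the paper does.
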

	\begin{proof}
		Calculating the set of out-neighbours of a set can be done in polynomial time. A vertex is added to $S$ in each iteration of the while loop, so if none of the exit conditions are met before the $(n-1)$st iteration, then after that iteration every vertex in the graph has been added to $S$ and thus its neighbourhood must be empty. So the algorithm runs in polynomial time.
	\end{proof}
	
	If $\min\{k_1,k_2\}=0$, then we can solve the {\sc $(B^+, \delta^- \ge 1)$-$[k_1,k_2]$-partition} problem in polynomial time. If $k_2 = 0$ we can try all possible roots of the out-branching (there is only one possible root if $r$ exists), and grow it as large as possible by using Algorithm \ref{grow}, with $S$ containing the single vertex $s$ that we are trying as root and $k = |V|$. As we start with a single vertex and each vertex that is added has at least one arc into it from the previous set, it follows by induction that the subdigraph $\induce{D}{S}$ induced by the set $S$ returned by Algorithm \ref{grow} will contain an out-branching $T^+_s$. If one of the possible roots grows to a set $S$ with  $|S| \ge k_1$, then because none of the vertices in $S$ has arcs to vertices in $V \setminus S$, it follows from our assumption that every vertex of $V$ (except possibly $r$) has in-degree at least 1, that we have $\delta^-(D-S) \ge 1$. Thus, $(V_1, V_2) = (S, V \setminus S)$ is a solution.
	
	For the case $k_1 = 0$, we will use the following algorithm which will prove useful later as well.
	\begin{algorithm}
		\caption{}\label{remove_0_degree}
		\begin{algorithmic}
			\Procedure{trim}{$D = (V,A)$} 
			\State $D' \gets$ copy of $D$ 
			\State $V' \gets V(D')$
			\State $Z \gets \{ \, v \in V' \mid d^-(v) = 0 \, \}$
			\While{$Z \not = \emptyset$}
				\State $D' \gets D' - Z$
				\State $V' \gets V(D')$
				\State $Z \gets \{ \, v \in V' \mid d^-(v) = 0 \, \}$			
			\EndWhile
			\State \Return $D'$
			\EndProcedure
		\end{algorithmic}
	\end{algorithm}
	
%
	Algorithm \ref{remove_0_degree} simply iteratively removes vertices with in-degree 0 in the current digraph and return the resulting digraph.
	Note that the resulting digraph $D'$  will have $\delta^-(D') \ge 1$.
	\begin{lemma}
		Algorithm \ref{remove_0_degree} finds the largest possible subdigraph with $\delta^- \ge 1$ in polynomial time.
	\end{lemma}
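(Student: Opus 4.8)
The plan is to prove two separate things: that \textsc{trim} terminates in polynomial time, and that the digraph $D'$ it returns is the (unique) maximum induced subdigraph of $D$ with minimum in-degree at least $1$. For the second part I would first reduce to the induced case: given any subdigraph $H$ of $D$ with $\delta^-(H)\ge 1$, replacing $H$ by $D[V(H)]$ only adds arcs and hence only increases in-degrees, so it still has $\delta^-\ge 1$ and has the same number of vertices; thus it suffices to compare $D'$ against induced subdigraphs.

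For the running time I would argue exactly as in Lemma~\ref{growpoly}: each time the while loop body executes, the set $Z$ is nonempty, so at least one vertex is deleted from $D'$; hence the loop runs at most $n=|V|$ times, and each iteration only recomputes the set of in-degree-zero vertices of the current digraph, which is clearly polynomial. So \textsc{trim} runs in polynomial time.

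For correctness, note first that the while loop exits only when $Z=\emptyset$, i.e.\ when no vertex of the current digraph has in-degree $0$; since $D'$ is exactly that digraph, $\delta^-(D')\ge 1$. It remains to prove maximality: if $W\subseteq V$ satisfies $\delta^-(D[W])\ge 1$, then $W\subseteq V(D')$. I would linearize the deletions as $v_1,v_2,\dots,v_t$ in the order the algorithm removes them (breaking ties inside a single round arbitrarily) and show by induction on $i$ that $v_i\notin W$. At the step where $v_i$ is removed it has in-degree $0$ in $D-\{v_1,\dots,v_{i-1}\}$ (this is the current digraph, because the algorithm never reinserts a vertex). If $v_i\in W$, then since $\delta^-(D[W])\ge 1$ the vertex $v_i$ has an in-neighbour $u\in W$ in $D$; by the induction hypothesis $u\notin\{v_1,\dots,v_{i-1}\}$, so the arc $uv_i$ is still present in $D-\{v_1,\dots,v_{i-1}\}$, contradicting $d^-(v_i)=0$ there. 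Hence no vertex of $W$ is ever deleted, so $W\subseteq V(D')$, and in particular $|W|\le|V(D')|$.

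The only mildly delicate point is the bookkeeping for the induction, namely making precise that the digraph present when $v_i$ is deleted equals $D-\{v_1,\dots,v_{i-1}\}$; this follows immediately from the fact that \textsc{trim} only ever removes vertices. Everything else is routine, so I do not expect a genuine obstacle here.
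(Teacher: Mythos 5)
Your proof is correct and follows essentially the same reasoning as the paper's (very brief) argument: the paper simply asserts that a vertex is only removed if it cannot belong to any subdigraph of minimum in-degree at least $1$, which is exactly the claim you establish carefully by induction on the deletion order. Your version just fills in the details (including the harmless reduction to induced subdigraphs), so there is nothing to flag.
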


	\begin{proof}
%
%
%
%
%
		
		It is clear that the algorithm runs in polynomial time and the other part of the claim follows from the fact that a vertex is only removed if it cannot be part of any subdigraph of minimum in-degree at least 1.
	\end{proof}
	
	To handle the case $k_1 = 0$ we simply apply Algorithm \ref{remove_0_degree} to the input $D$.
	Note that the removed vertices (if any) induce a subdigraph with an out-branching rooted at $r$, because a vertex only has in-degree 0 if it is $r$ or if its in-neighbours were removed, implying that an arc into it existed in $D$. Hence if the output $D'=(V', A')$ of has $|V'| \ge k_2$ then $(V \setminus V', V')$ is a solution.\\

	From now on we assume that $k_1,k_2\geq 1$. Let us call a 2-partition $(V_1,V_2)$ of $V$ {\bf good} if $D[V_1]$ has an out-branching and $\delta^-(D[V_2]) \ge 1$.
	As we have assumed that we have at most one  vertex with in-degree 0, the following lemma, which is the basis of the polynomial algorithm of Theorem 3.9 in \cite{bangTCS640}, shows that in order to verify that $(D,k_1,k_2)$ is a 'yes'-instance we only need to find an induced subdigraph $D'$ of $D$ such that $(D',k_1,k_2)$ is a 'yes'-instance.
	
	\begin{lemma}
		\label{sub_solution}
		
		Let $I=(D=(V,A), k_1, k_2)$ be an instance of the $(B^+, \delta^- \ge 1)$-$[k_1,k_2]$-partition problem with at most 1 vertex with in-degree 0. Then any 2-partition $(V_1', V_2')$ (with $r\in V_1'$ if $r$ exists) of a subset of $V$, where
		\begin{itemize}
			\item $D[V_1']$ has an out-branching
			\item $\delta^-(D[V_2']) \ge 1$
			\item $|V_i'| \ge k_i$, $i=1,2$
		\end{itemize}
		can be extended to a good partition of $V$ in polynomial time.
	\end{lemma}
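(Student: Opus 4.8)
The plan is to distribute the unassigned vertices $U = V \setminus (V_1' \cup V_2')$ greedily, pushing as much as possible into the out-branching side and dumping the rest into the $\delta^-\ge 1$ side. Concretely, I would run Algorithm~\ref{grow} on the induced subdigraph $D[V_1' \cup U]$ with starting set $S = V_1'$ and bound $k = |V|$; since the only vertices of $D[V_1'\cup U]$ outside $V_1'$ are those of $U$, every vertex the procedure adds comes from $U$. Let $V_1$ be the returned set, let $U' = U \setminus V_1$, and set $V_2 = V_2' \cup U' = V \setminus V_1$. This step is polynomial by Lemma~\ref{growpoly}, and the remaining checks are trivial, so the whole extension runs in polynomial time. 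Clearly $V_1 \supseteq V_1'$ and $V_2 \supseteq V_2'$, so $|V_i| \ge |V_i'| \ge k_i$ for $i=1,2$.

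Next I would verify that $D[V_1]$ has an out-branching. By hypothesis $D[V_1']$ has an out-branching; call its root $s$ (if $r$ exists it lies in $V_1'$ and has in-degree $0$ in $D$, hence in $D[V_1']$, so necessarily $s = r$). Exactly as in the argument given earlier for the case $k_2 = 0$, whenever the procedure adds a vertex $v$ as an out-neighbour of the current set via an arc $uv$, we may attach $v$ to the current out-tree at $u$; by induction the set returned induces a digraph containing an out-branching rooted at $s$. Hence $D[V_1]$ has an out-branching.

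It remains to check $\delta^-(D[V_2]) \ge 1$. Each vertex of $V_2'$ already has an in-neighbour in $V_2' \subseteq V_2$, so consider $v \in U'$. Since $r$ (if it exists) lies in $V_1' \subseteq V_1$, we have $v \ne r$, so $d^-_D(v) \ge 1$. Because Algorithm~\ref{grow} terminated with $N^+(S) = \emptyset$ in $D[V_1'\cup U]$ and $V_1 = S \subseteq V_1' \cup U$, no vertex of $U'$ has an in-neighbour in $V_1$; therefore every in-neighbour of $v$ in $D$ lies in $V_2' \cup U' = V_2$, and there is at least one such. Thus $\delta^-(D[V_2]) \ge 1$, and $(V_1,V_2)$ is a good partition of $V$ extending $(V_1',V_2')$.

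The only delicate point is the interaction of the two requirements in the last step: the leftover vertices $U'$ can all be absorbed into $V_2$ precisely because (i) forcing $r \in V_1$ guarantees that each of them has positive in-degree in $D$, and (ii) the grow step is run to saturation, so none of them has an in-neighbour in $V_1$ — if either property failed, $D[V_2]$ could contain a vertex of in-degree $0$. Everything else is bookkeeping.
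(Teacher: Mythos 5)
Your proposal is correct and follows essentially the same route as the paper: running Algorithm~\ref{grow} to saturation on $D-V_2'$ (your $D[V_1'\cup U]$) from $S=V_1'$, extending the out-branching inductively, and observing that the leftover vertices have no in-neighbour in $V_1$ and are not $r$, hence retain in-degree at least~$1$ inside $V_2$. No gaps.
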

	\begin{proof}
		Call a 2-partition $(V_1', V_2')$ that satisfies the conditions of the lemma a {\bf subsolution}. 
		
		Fix any subsolution.
		Use Algorithm \ref{grow} on the graph $D-V_2'$, starting with $S=V_1'$ and $k=|D - V_2'|$, and denote the result by $V_{T^+}$. In other words we grow the out-branching as large as possible, while not using any vertex in $V'_2$. We claim that $(V_{T^+}, V \setminus V_{T^+})$ is a solution. Clearly $|V_{T^+}| \ge |V_1'| \ge k_1$.
			Since $D[V_1']$ has an out-branching and Algorithm \ref{grow} only adds a vertex $v$ if it has an in-neighbour in the current set $S$, $V_{T^+}$ must contain an out-branching. Secondly, because $r \in V_1' \subseteq V_{T^+}$ if it exists, we know that all vertices in $V_2 = V \setminus V_{T^+}$ had in-degree at least 1 in $D$. It was also the case that $\delta^-(D[V_2']) \ge 1$. Now because $V_{T^+}$ was grown as large as possible in $D-V_2'$, the vertices in $V_2 \setminus V_2'$ (if any) were not reachable from $V_{T^+}$ and thus must still have in-degree at least 1 in $D[V_2]$. Moreover, $V_2' \subseteq V_2$ meaning $|V_2| \ge |V_2'| \ge k_2$, and thus $(V_{T^+}, V_2)$ is a solution.
		
		As this is a simple application of Algorithm \ref{grow} a subsolution can be extended to a solution in polynomial time.
	\end{proof}

	So from here on, finding a subsolution is sufficient to solve the problem.\\

	The following result, due to Shen, will be  used in the proof of our main result.
	\begin{theorem}\cite{shenDM211}
		\label{girth}
		Suppose $G$ is a digraph of order $n$ and girth $g$ with $\delta^+(G) \ge 1$. Let $t = |\{ \, u \in G \mid d^+(u) = 1 \, \}|$. Then
		
		\[ 
		g \le 
		\begin{cases}
			\lceil n / 2\rceil & \text{if }\, t = 0, \\
			\lceil (n + t - 1) / 2\rceil & \text{if }\, t \ge 1
		\end{cases}
		\]
	\end{theorem}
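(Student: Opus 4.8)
The statement is quoted from \cite{shenDM211}; here is how I would prove it. Write $G$ for the digraph and fix a shortest directed cycle $C=v_1v_2\cdots v_gv_1$, so $|C|=g$ is the girth (such a cycle exists because $\delta^+(G)\ge 1$). The first move is a reformulation: since $g$ is an integer, for $t=0$ the claimed bound $g\le\lceil n/2\rceil$ is equivalent to $n\ge 2g-1$, and for $t\ge 1$ the bound $g\le\lceil(n+t-1)/2\rceil$ is equivalent to $n\ge 2g-t$; both are subsumed by the single inequality
\[
n \;\ge\; 2g-\max\{t,1\}.
\]
So it is enough to produce at least $g-\max\{t,1\}$ vertices lying outside $V(C)$.

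Two preparatory facts. \textbf{(a) A harmless reduction.} Let $R$ be the set of vertices of $G$ from which $C$ is reachable. If $R\ne V(G)$ then every arc out of a vertex of $V(G)\setminus R$ stays in $V(G)\setminus R$ (otherwise its tail would reach $C$), so $G[V(G)\setminus R]$ has minimum out-degree at least $1$ and hence contains a cycle, necessarily of length $\ge g$; this already gives $n\ge |V(C)|+|V(G)\setminus R|\ge 2g$ and we are done. So assume from now on that every vertex reaches $C$; in particular every off-cycle vertex $u$ has a finite distance $d(u,C)$ to $C$. \textbf{(b) $C$ is chordless.} An arc $v_iv_j$ with $j\not\equiv i+1\pmod{g}$ would, together with the $C$-subpath from $v_j$ back to $v_i$, close up a cycle of length $1+((i-j)\bmod g)<g$, contradicting minimality of $g$. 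Consequently every out-neighbour of a cycle vertex other than its $C$-successor lies off $C$, so the number of arcs from $V(C)$ into the off-cycle part is $\sum_i\bigl(d^+(v_i)-1\bigr)\ge g-t'$, where $t'\le t$ is the number of cycle vertices of out-degree exactly $1$.

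The heart of the argument is a \emph{window lemma} followed by a counting step. Take an off-cycle vertex $u$, set $d=d(u,C)$, and let $u\to u_1\to\cdots\to u_d=v_k$ be a shortest path to $C$; its internal vertices $u_1,\dots,u_{d-1}$ are off-cycle and distinct. For each in-neighbour $v_i$ of $u$ on $C$, the cycle $v_i\to u\to\cdots\to v_k\to\cdots\to v_i$ (last stretch along $C$) has length $(d+1)+((i-k)\bmod g)\ge g$, which pins $i$ into a block of at most $d+1$ consecutive indices ending at $k-1$; thus $u$ has at most $d+1$ in-neighbours on $C$, and whenever $d\ge 2$ it drags along $d-1\ge 1$ further off-cycle vertices beneath it. Charging the $\ge g-t'$ cycle-exiting arcs to the off-cycle vertices that receive them, and using the window bound together with the chains supplied by vertices far from $C$, a careful count shows the number of off-cycle vertices is at least $g-\max\{t,1\}$ — exactly the reformulated goal. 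The split $t=0$ versus $t\ge 1$ enters here: when $t=0$ every cycle vertex sends an arc off $C$, and an off-cycle vertex receiving two such arcs is, by the window lemma, forced into so rigid a local pattern that one can still locate one spare off-cycle vertex, which is what degrades $n\ge 2g$ to the advertised $n\ge 2g-1$.

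The only genuine difficulty is the accounting in the last paragraph: off-cycle vertices that simultaneously absorb several cycle in-arcs and sit far from $C$ have to be handled without double counting, and it is precisely there that ``shortest'', ``chordless'', and the girth inequality must be combined. This is in essence the Caccetta--H\"aggkvist theorem for minimum out-degree $2$, refined to tolerate $t$ vertices of out-degree $1$, and I would expect the bulk of a full proof to consist of exactly this bookkeeping.
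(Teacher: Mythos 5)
The paper does not prove this statement at all; it is imported verbatim from Shen \cite{shenDM211} with a citation, so there is no in-paper argument to compare yours against. Judged on its own, your write-up is not a proof: the reformulation $n\ge 2g-\max\{t,1\}$, the reduction to the case where every vertex reaches $C$, the chordlessness of $C$, and the window lemma (at most $d+1$ in-neighbours on $C$ for an off-cycle vertex at distance $d$) are all correct and are plausible first steps, but the entire difficulty of the theorem is the final count, and you explicitly defer it (``a careful count shows\dots'', ``I would expect the bulk of a full proof to consist of exactly this bookkeeping''). That step is not routine bookkeeping, and the ingredients you have assembled are not sufficient to carry it out.

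Concretely: the only way $t$ enters your argument is through $t'$, the number of \emph{cycle} vertices of out-degree $1$, via the bound of $\ge g-t'$ arcs leaving $V(C)$. But the theorem genuinely needs the off-cycle vertices of out-degree $1$, which your setup never sees. Take $C=v_1v_2v_3v_4v_1$ together with $u_1,u_2$ and arcs $v_1u_1,v_2u_1,u_1v_3$ and $v_3u_2,v_4u_2,u_2v_1$. Here $g=4$, $n=6$, every $v_i$ has out-degree $2$ (so $t'=0$), $t=2$, and the theorem is tight: $g=\lceil (n+t-1)/2\rceil=4$, while $\lceil n/2\rceil=3<g$. So any charging scheme built only from ``$\ge g-t'$ arcs leave $C$'' plus ``a vertex at distance $d$ absorbs at most $d+1$ of them and drags along $d-1$ internal path vertices'' is doomed: in this example those facts are satisfied with equality by just two off-cycle vertices, far short of the $g-\max\{t',1\}=3$ your scheme would need to produce. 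The missing idea is precisely the analysis of an off-cycle vertex that absorbs two or more $C$-arcs: if it also has out-degree $\ge 2$, its second out-neighbour must generate either a shorter cycle or additional vertices, and it is this dichotomy (not the window lemma alone) that makes each unit of $t$ buy exactly one saved vertex. Until that case analysis is written out, the proof is a sketch of the easy half of Shen's argument with the hard half asserted.
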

	
	The result also holds if we let $t$ be the number of vertices with in-degree 1 \ay{(in a digraph with $\delta^-(G) \ge 1$)}, instead of out-degree 1 (just reverse all arcs and apply Theorem \ref{girth}). We are now ready to prove our main result.
	
	
	\begin{theorem} \label{mainthm}
		The $(B^+, \delta^- \geq 1)$-$[k_1,k_2]$-partition problem is FPT.
	\end{theorem}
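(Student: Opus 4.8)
The plan is to invoke Lemma~\ref{sub_solution} to reduce the problem to exhibiting a subsolution, to show that a positive instance always has a subsolution of size bounded by a function of $k_1,k_2$, and then to find such a subsolution by colour coding. The cases $\min\{k_1,k_2\}=0$ and ``$D$ has at least two vertices of in-degree $0$'' have already been settled above, so assume $k_1,k_2\ge 1$ and that $D$ has at most one source $r$. If $r$ exists then, having in-degree $0$, it must be the root of the out-branching of $D[V_1]$ in every solution; if $r$ does not exist we branch over the $n$ candidates $s$ for that root. Write $s$ for this root. We will use the following remark: a set $A$ with $s\in A$ induces a digraph with an out-branching if and only if $A$ is obtainable from $\{s\}$ by Algorithm~\ref{grow}, i.e.\ $A$ admits an ordering $s=a_1,a_2,\dots,a_{|A|}$ in which each $a_i$ has an in-neighbour among $a_1,\dots,a_{i-1}$.

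\smallskip
\emph{A bounded subsolution.} I claim the instance is positive if and only if there are disjoint sets $A,B\subseteq V$ such that $s\in A$, $|A|=k_1$, $D[A]$ has an out-branching, $k_2\le|B|<2k_2$, and $\delta^-(D[B])\ge 1$. The ``if'' direction follows at once from Lemma~\ref{sub_solution}. For ``only if'', take any solution $(V_1,V_2)$: truncating a grow-ordering of $V_1$ from $s$ after $k_1$ vertices gives an out-tree $A\subseteq V_1$ with $s\in A$ and $|A|=k_1$. For $B$ we use that every digraph $H$ with $\delta^-(H)\ge 1$ and $|H|\ge k_2$ has an induced subdigraph $B$ with $\delta^-(D[B])\ge 1$ and $k_2\le|B|<2k_2$: if some directed cycle of $H$ reaches at least $k_2$ vertices within $H$, grow it one vertex at a time and stop at size exactly $k_2$; otherwise every cycle reaches fewer than $k_2$ vertices, and the union of the reachability-closures of cycles taken one at a time, stopped the first time it reaches at least $k_2$ vertices, is such a $B$ (the last closure added keeps the total below $2k_2$). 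Theorem~\ref{girth} can be used here as well to bound $|B|$. Applying this with $H=D[V_2]$ yields the desired $B\subseteq V_2$, disjoint from $A$.

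\smallskip
\emph{Colour coding.} We have $|A\cup B|\le N:=k_1+2k_2-1$, so it suffices, by running over an $(n,N)$-perfect hash family of $2^{O(k_1+k_2)}\log n$ colourings $c\colon V\to[N]$, to detect a colourful realisation of such a pair. Fix $c$ and guess the $k_1$-element set $\Gamma_A\subseteq[N]$ of colours to be used by $A$; as any valid $B$ will use only colours outside $\Gamma_A$, it is automatically disjoint from any valid $A$, so the two searches decouple. Whether there is a colourful out-tree rooted at $s$ with colour set exactly $\Gamma_A$ is decided by the standard colour-coding dynamic program for finding a $k$-vertex out-tree. Whether there is a colourful set $B$ with $\delta^-(D[B])\ge 1$, colours inside $[N]\setminus\Gamma_A$, and $|B|\ge k_2$ is decided by a dynamic program over subsets of $[N]\setminus\Gamma_A$: choosing one in-neighbour per vertex shows a $\delta^-\ge1$ set is a union of directed cycles with in-trees flowing into them, so we first build colourful cycles via a path-type dynamic program --- a colourful $u$--$v$ path closed by the arc $vu$ --- and then grow colourful out-neighbours, in $2^{O(k_2)}n^{O(1)}$ time. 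Derandomising the hash family gives the $f(k_1,k_2)\cdot n^{O(1)}$ running time.

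\smallskip
The step I expect to be the main obstacle is the bounded-subsolution claim, specifically obtaining the $\delta^-\ge 1$ part $B$ of size $O(k_2)$: the out-branching side is immediate from truncating a grow-ordering, whereas the $\delta^-\ge1$ side is the substantive point and is precisely where Theorem~\ref{girth} (or, alternatively, the reachability-closure argument sketched above) is needed. A secondary technical point is making the colour-coding dynamic program for colourful $\delta^-\ge 1$ subdigraphs precise.
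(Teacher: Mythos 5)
Your reduction to a \emph{bounded} subsolution is the step that fails, and it fails irreparably: the claim that every digraph $H$ with $\delta^-(H)\ge 1$ and $|H|\ge k_2$ contains an induced subdigraph $B$ with $\delta^-(D[B])\ge 1$ and $k_2\le |B|<2k_2$ is false. Take $H$ to be a single directed cycle of length $n\ge 2k_2$: deleting any vertex leaves its successor with in-degree $0$, so the only induced subdigraph of $H$ with $\delta^-\ge 1$ is $H$ itself, of size $n\ge 2k_2$. Your first branch (``grow a cycle and stop at size exactly $k_2$'') silently assumes the cycle itself has fewer than $k_2$ vertices, and Theorem~\ref{girth} cannot rescue you, since it only bounds the girth by roughly $n/2$, i.e.\ linearly in $n$ rather than by a function of $k$. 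The failure propagates to correctness of the whole algorithm: a yes-instance consisting of a long directed cycle $C_n$ ($n\ge 2k_2$) together with a root $s$, an out-star $s\to u_1,\dots,u_{k_1-1}$, and arcs from $C_n$ back into $\{s,u_1,\dots,u_{k_1-1}\}$ has a good partition ($V_2=V(C_n)$) but admits no pair $(A,B)$ with $|B|<2k_2$, so your colour-coding search answers ``no''. Large minimum cycle length on the $\delta^-\ge 1$ side is exactly the obstruction the paper's proof is built around; that is why it never tries to bound $|V_2'|$ at all, instead checking the $V_2$ side by trimming the whole complement of a candidate out-tree.

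The second, related gap is that once $B$ cannot be bounded, colour coding no longer decouples the two sides. The paper's route is: define non-branchable arcs (arcs $uv$ such that $D-\{u,v\}$ has no $\delta^-\ge 1$ subdigraph of size $k_2$) to get a subdigraph $D_B$ of potential branching arcs; if every vertex has at most $h(k)$ out-neighbours in $D_B$, exhaustively enumerate all $k_1$-vertex out-trees of $D_B$ (there are only $f(k)\cdot n$ many) and trim the complement of each; if some vertex $s$ has more than $h(k)$ out-neighbours in $D_B$, show the instance is automatically a yes-instance by a delicate argument that either finds many strong components to spread over, or, inside one strong component, builds an auxiliary digraph $D_S$ on the out-neighbours of $s$ and uses Shen's girth bound plus a contraction-to-$D_{\bar T}$ counting argument to exhibit a cycle avoiding $k_1$ of those out-neighbours. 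None of this is recoverable from your sketch; the bounded-witness-plus-colour-coding template does not apply to this problem.
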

	\begin{proof} 
		We will describe an FPT algorithm which correctly decides whether the input $(D,k_1,k_2)$ is a yes-instance of the $(B^+, \delta^- \geq 1)$-$[k_1,k_2]$-partition problem.
		
		From the previous observations, we can assume that the input $(D,k_1,k_2)$ satisfies that $\min\{k_1,k_2\} \ge 1$ and $D$ has at most one vertex with in-degree 0.
			Let $k = \max(k_1, k_2)$ and define the functions $f,h$ by setting $f(k) = 32 k^3 + 4k$ and $h(k) = 2k \cdot f(k)$.\\

		We start by determining some arcs that cannot be part of the out-branching in any solution. We say that an arc $uv\in A$ is {\bf non-branchable} if $D-\{u,v\}$ does not have a subdigraph $D''$ of size at least $k_2$ with $\delta^-(D'') \ge 1$. Thus if an arc $uv$ is non-branchable, then at least one of the vertices $u,v$ must belong to $V_2$ in any good partition $(V_1,V_2)$. Using Algorithm \ref{remove_0_degree} we can identify the set $B^\dagger\subset A$ of non-branchable arcs in polynomial time.
Let $B = A \setminus B^\dagger$ be the potential branching arcs and let $D_B=(V,B)$ be the subdigraph of $D$ containing exactly the arcs in $B$.  For each vertex $u \in V$ we now define $N_B^+(u)$ as the out-neighbours of $u$ in $D_B$.
		
\[N_B^+(u) = N^+_{D_B}(u)=\{\, v \mid uv \in B \,\}\]

\noindent{}Clearly, $N_B^+(u)$ can be calculated for each $u\in V$ in polynomial time.
We now distinguish several cases which cover all the possibilities.
		
		\begin{case} 
			$\forall u \in V :|N_B^+(u)| \le h(k)$: \label{case:all_possible_outbranchings}
			
			In this case we can check all possible out-trees of size exactly $k_1$ in $D_B$. We do this by trying each possible root, of which there are at most $n = |V|$ (there is only 1 if $r$ exists), and for each root the out-tree will have height at most $k_1$. 
			
			For each vertex, already included in the out-tree, we can use between 0 and $k_1$ of at most $h(k)$ different arcs leaving that vertex. There are at most
			\[\sum_{i=0}^{k_1} {\binom{h(k)}{i}} \]
			ways to do so.
			
			So a very rough upper bound on the total number of out-trees that must be checked is 
			\[n \cdot \left(\sum_{i=0}^{k_1} {\binom{h(k)}{i}}\right)^{k_1} \]
			which means if each check can be done in polynomial time, we can solve this case in FPT time.
			
			For each possible out-tree $T^+$ with $k_1$ vertices we run Algorithm \ref{remove_0_degree} on $D-V(T^+)$. If the resulting graph $D'$ has at least $k_2$ vertices, then $(V_1', V_2') = (V(T^+), V(D'))$ is a subsolution. As Algorithm \ref{remove_0_degree} is indeed a polynomial time algorithm, we conclude that Case \ref{case:all_possible_outbranchings} can be solved in FPT-time.
		\end{case} 
		
		For the case where $\max_{u \in V} \{|N_B^+(u)|\} > h(k)$, we first split the case into whether or not the vertex $r$ exists, and begin with the case where $r$ does not exist. 
		
		\begin{case} 
			$\delta^-(D) \ge 1$ and $\exists s \in V: |N_B^+(s)| > h(k)$: \label{case:many_neighbours_no_r}
			
			We will show that, if we are in this case then $(D,k_1,k_2)$  is a yes-instance  and, in fact,  any vertex $s$ with $|N_B^+(s)| > h(k)$, can be the root of the out-branching in a solution.
			
			Fix any $s$ with $|N_B^+(s)| > h(k)$. 
			
			
			We \ay{will} say that we {\bf contract} $v$ into $u$, if the arc $uv$ exists. Contracting $v$ into $u$ means removing the vertex $v$ and adding arcs from $u$ to every out-neighbour $w$ of $v$ (if $uw$ is not already an arc).
			 As it turns out, we will only contract vertices into our choice for the root of the out-branching $s$, and unless otherwise stated,  we do not rely on the vertices that are contracted into $s$ for our solution.
			
			Let $\mathscr{C}$ be the set of strong components of $D-s$ and for each strong component $C\in \mathscr{C}$  let $N_{B,C}^+(s)$ denote the set of \jba{out-neighbours of $s$ inside $C$} using arcs in $B$, that is,
			
			\[ N_{B,C}^+(s) = N_B^+(s) \cap V(C) \]

			Note that $\mathscr{C}$ and $N_{B,C}^+(s)$ can easily be computed in polynomial time.
			
			
			
			\begin{subcase} \label{manyinsingle} 
				$\exists C \in \mathscr{C}: |N_{B,C}^+(s)| \ge f(k)$:
				
				Fix $C \in \mathscr{C}$ such that  $|N_{B,C}^+(s)| \ge f(k)$. First, we look for a subsolution $(V'_1,V'_2)$, where $C$ is contained completely in one of the sets $V'_i$. 
				
				Let $D'$ be the output from Algorithm \ref{remove_0_degree} on input $D-C-s$ \ay{(i.e. $D'$ is
a maximum subdigraph in $D-C-s$ with $\delta^-(D') \geq 1$)}. If $|D'| \ge k_2$, then $(\{s\} \cup V(C), V(D'))$ is a subsolution, as $s$ has at least $f(k) \ge k_1$ out-neighbours in $C$ and $\delta^-(D') \ge 1$. 
				Similarly, we can try Algorithm \ref{grow} on $D_B-C$ with $S=\{s\}$ and $k=k_1$. If the returned vertex set $V_1'$ has size $|V_1'| \ge k_1$, then, because $C$ is a strong component with at least $f(k) \ge k_2$ vertices, $(V_1', V(C))$ is a subsolution. Both cases can clearly be handled in  polynomial time so we may assume that neither of these cases occur, implying that we must split $C$ in some way to obtain a (sub)solution.

				Iteratively contract all trivial initial strong components into $s$ (iteratively these are the vertices with in-degree 0 in the current digraph, \ay{$D-s$}.) and call the resulting digraph $D$ for simplicity.
If $C$ is not an initial strong component of $D$ after this process, then there is a non-trivial strong component $C'$ with a path into $C$. Using Algorithm \ref{grow} on $D-s$ with $S=V(C')$ and $k=k_2$, we obtain a set $V_2'$, where $\delta^-(D[V_2']) \ge 1$ because we started with a non-trivial strong component. We also have $|V_2'| \geq k_2$ and $|V'_2\cap V(C)|<k_2$ thus $V_1' = N_{B,C}^+(s) \setminus V_2'$ has size $|V_1'| \ge f(k) - k_2 \ge k_1$, so $(V_1' \cup \{s\}, V_2')$ is a subsolution which can be found in polynomial time. Hence we can assume that $C$ is an initial strong component after the contraction step.\\

				Now, observe that any cycle in $D[V(C)]$, small or large, that avoids at least $k_1$ vertices of $N_{B,C}^+(s)$, gives rise to a subsolution. Clearly, if such a cycle $O$ contains at least $k_2$ vertices, we immediately have a subsolution by taking an out-star consisting of $s$ and $k_1-1$ vertices from $N_{B,C}^+(s)-O$ for $V_1'$ and the vertices of $O$ for $V_2'$. If $O$ contains less than $k_2$ vertices, then we can grow it until it has $k_2$ vertices, which again leaves at least $f(k)-k_2 \ge k_1$ vertices for an out-branching from $s$ (we can just take  an out-star from $s$ with $k_1-1$ leaves). Hence it suffices to show that such a cycle $O$ indeed exists.\\

				We first check whether $D[V(C)-N_{B,C}^+(s)]$ is acyclic. If this is not the case, then by the previous observation we have a subsolution and we are done. The same conclusion holds if there is a cycle in $C$ which contains only one vertex of $N_{B,C}^+(s)$ (and the existence of such a cycle can easily be checked in polynomial time). Hence we may assume that every cycle in $C$ contains at least two vertices of $N_{B,C}^+(s)$.
				
				To summarize the situation, these are some facts about the situation we are in:
				
				\begin{enumerate}
					\item \jba{$C$ is an initial strong component of $D-s$.} \label{cond:C_initial}
\item $s$ has at least $f(k)$ out-neighbours inside $C$. We will denote this set of out-neighbours by $S = N_{B,C}^+(s)$ for brevity.
					\item Every cycle in $C$ contains at least two vertices  of $S$. \label{cond:every_cycle_use_S}\label{atleast2inS}
					
					\item By definition of the arc set $B$, for each vertex $u \in S$ there exists a subdigraph $D'$ of $D-\{s,u\}$ with $\delta^-(D')\ge 1$ and $|V(D')|\geq k_2$. \label{cond:sub_d_exists} 
					\item $D-C-s$ does not have a subdigraph with $\delta^- \ge 1$ and size at least $k_2$, so some of the vertices of the subdigraph $D'$ from fact \ref{cond:sub_d_exists} belong to  $C$. Additionally, we had from fact \ref{cond:C_initial} that $C$ is initial, so some cycle in $C$  must be part of the subdigraph of minimum in-degree at least one that we are looking for.

					\item The objective is to show that there exists a cycle in $C$ that avoids at least $k_1$ vertices in $S$
				\end{enumerate}
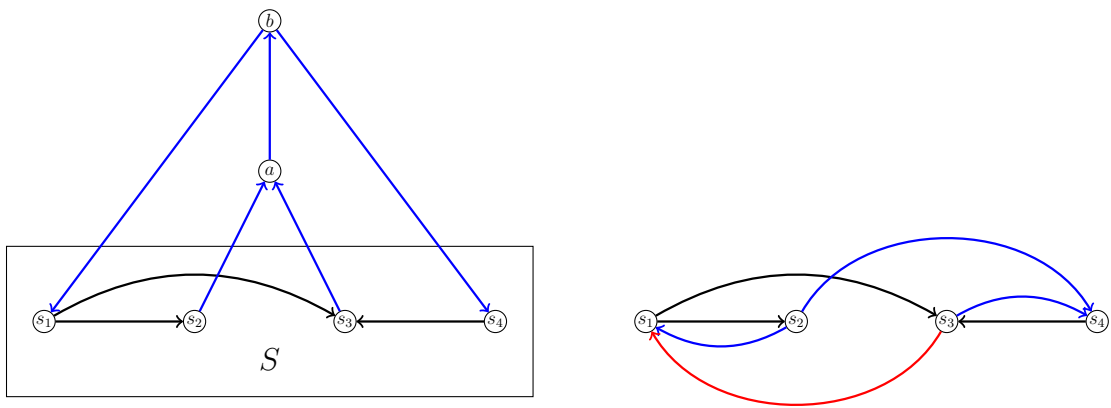
\begin{figure}[H]
\begin{center}
\tikzstyle{vertexB}=[circle,draw, minimum size=14pt, scale=0.6, inner sep=0.5pt]
\tikzstyle{vertexR}=[circle,draw, color=red!100, minimum size=14pt, scale=0.6, inner sep=0.5pt]

\begin{tikzpicture}[scale=1]
\node (s1) at (0,0) [vertexB] {$s_1$};
\node (s2) at (2,0) [vertexB] {$s_2$};
\node (s3) at (4,0) [vertexB] {$s_3$};
\node (s4) at (6,0) [vertexB] {$s_4$};
\draw[->, line width=0.03cm] (s1) to (s2);
\draw[->, line width=0.03cm] (s1) to [out=30,in=150] (s3);
\draw[->, line width=0.03cm] (s4) to (s3);
\draw (-0.5,-1) rectangle (6.5,1);
\node () at (3,-0.5) {$S$};
\node (a) at (3,2) [vertexB] {$a$};
\node (b) at (3,4) [vertexB] {$b$};
\draw[->, line width=0.03cm, color=blue] (s2) to (a);
\draw[->, line width=0.03cm, color=blue] (s3) to (a);
\draw[->, line width=0.03cm, color=blue] (a) to (b);
\draw[->, line width=0.03cm, color=blue] (b) to (s1);
\draw[->, line width=0.03cm, color=blue] (b) to (s4);
\node (s1) at (8,0) [vertexB] {$s_1$};
\node (s2) at (10,0) [vertexB] {$s_2$};
\node (s3) at (12,0) [vertexB] {$s_3$};
\node (s4) at (14,0) [vertexB] {$s_4$};
\draw[->, line width=0.03cm] (s1) to (s2);
\draw[->, line width=0.03cm] (s1) to [out=30,in=150] (s3);
\draw[->, line width=0.03cm] (s4) to (s3);
\draw[->, line width=0.03cm, color=blue] (s2) to [out=210,in=-30] (s1);
\draw[->, line width=0.03cm, color=blue] (s3) to [out=30,in=150] (s4);
\draw[->, line width=0.03cm, color=red] (s3) to [out=240,in=-60] (s1);
\draw[->, line width=0.03cm, color=blue] (s2) to [out=60,in=120] (s4);
\end{tikzpicture}
\end{center}
\caption{An example of the construction of $D_S$. Left: the strong component $C$ with $S=N^+_{B,C}(s)$ and $\induce{D}{S}$ shown inside the rectangle. Right: the final digraph $D_S$ with the \jba{red} arc representing the path $s_3\dom a\dom b\dom s_1$. The other new arcs are shown in blue.}
						\label{fig:D_S}
				\end{figure}

				Let the digraph $D_S$ be obtained as follows, starting from a copy of $D[S]$: For each ordered pair $a,b \in S$, such that there is a path $a \to v_1 \to v_2 \to \ldots \to v_l \to b$ in $C$ where all $v_i \notin S$ for $i\in [l]$, we add the arc $ab$ to $D_S$, if it does not already exist. Note that, $D_S$ does not have parallel arcs. We use $d_S^+(u)$ ($d_S^-(u)$) to denote the out-degree (in-degree) of vertex $u$ in $D_S$.
				
				Let $a,b\in S$ and note that since $C$ is strong it contains an $(a,b)$-path $P$. Let $\langle u_1, \ldots, u_m\rangle$ be those vertices on
 $P$ that are in $S$, listed in the order in which they are visited by $P$. By the construction of $D_S$ it contains the path $a \to u_1 \to \ldots \to u_m \to b$ in $D_S$. Since $a,b$ were arbitrary vertices, it follows that $D_S$ is strong.
				With a similar argument and the fact that every cycle in $C$ contains at least two vertices of $S$, we can conclude that every  cycle in $C$  corresponds to a cycle in $D_S$. 
				We also have that a cycle in $D_S$ corresponds to a closed walk in $C$, so a cycle in $D_S$ that avoids at least $k_1$ vertices in $S$ is verification that there exists a cycle in $C$  which  avoids at least $k_1$ vertices in $S$. 
				
				Let $g=g(D_S)$ denote the girth of $D_S$ and let $t=\min\{t^+,t^-\}$, where $t^+ = |\{ \, u \in D_S \mid d_S^+(u) = 1 \, \}|$, $t^- = |\{ \, u \in D_S \mid d_S^-(u) = 1 \, \}|$. Now we apply Theorem \ref{girth} to $D_S$ (recall that this holds for both $t^+$ and $t^-$). If $t=0$, then because $|S| \ge f(k)>2k_1$ we have $g \le \lceil |S|/2 \rceil \le |S| - k_1$. Otherwise, by Theorem \ref{girth}, we have $g \le \lceil (|S| + t - 1) / 2\rceil$. If $t \le |S|-2k_1$ we make the following calculation.
				
				\[
				\begin{array}{lrcl}
																		& t 	& \le & |S| - 2k_1\\
										\Downarrow 			&			& 		& \\
															& |S| + t 	& \le & 2|S| - 2k_1\\
										\Downarrow 			&			& 		& \\
	& g\leq \lceil (|S| + t - 1) / 2\rceil 	& \le & |S| - k_1,
				\end{array}
				\]
				
				implying that every shortest cycle in $D_s$ avoids at least $k_1$ out-neighbours of $s$ in $D_S$. Clearly we can find a shortest cycle in polynomial time. Hence we may assume that $t > |S|-2k_1$. This implies that at least $|S|-4k_1$ vertices have $d_S^+ = d_S^- = 1$. Denote by $T$ the set of these vertices and let $\bar{T} = S \setminus T$. Consider $D_S[T]$, it will consist of some vertex-disjoint induced paths \ay{(as $\bar{T} \not=\emptyset$, by the definition of $B$)}. 
					Let ${\cal P}$ denote the set of these paths. Clearly ${\cal P}$ can be computed in polynomial time. Note that for each path $P = v_1 \to \ldots \to v_l \in {\cal P}$, there is exactly one pair $a,b \in \bar{T}$ such that $a \to v_1$ and $v_l \to b$ are arcs in $D_S$. Use $a \xrightarrow{P} b$ to denote a path from $a$ to $b$ using the path $P \in {\cal P}$.

				Suppose first that there is a path $P \in {\cal P}$ with at least $k_1$ vertices. We claim that there must be a cycle $W$ in $D_S$ avoiding $P$.
					Assume for the sake of contradiction that there is no such cycle. Then, because every vertex $u$  on $P$  has $d_S^+(u) = d_S^-(u) = 1$, every cycle in $D_S$ must contain the entire path $P$. Thus removing any vertex $v$ of $P$ would destroy all cycles in $D_S$ and hence also in $C$. But because $C$ was initial, this means no vertices in $C-v$ could be part of a subdigraph with $\delta^- \ge 1$, and as we are in a case where $D-C-s$ does not have subdigraph with $\delta^- \ge 1$ and size at least $k_2$, we get a contradiction to the definition of $N_B^+$ (since the arc $sv$ is in $B$). 

Suppose now that every path $ P \in {\cal P}$ has less than $k_1$ vertices. Recall that we have at most $4 k_1$ vertices in $\bar{T}$. We can represent $D_S$ by a directed multigraph $D_{\bar{T}}$ with vertex set $\bar{T}$ and with the arcs $A_{\bar{T}} \cup A_{\cal P}$ where $A_{\bar{T}} = A(D_S[\bar{T}])$ and $A_{\cal P} = \{\, a \xrightarrow{P} b \mid a,b \in \bar{T}, P \in {\cal P}\,\}$. See Figure \ref{Tfig}.

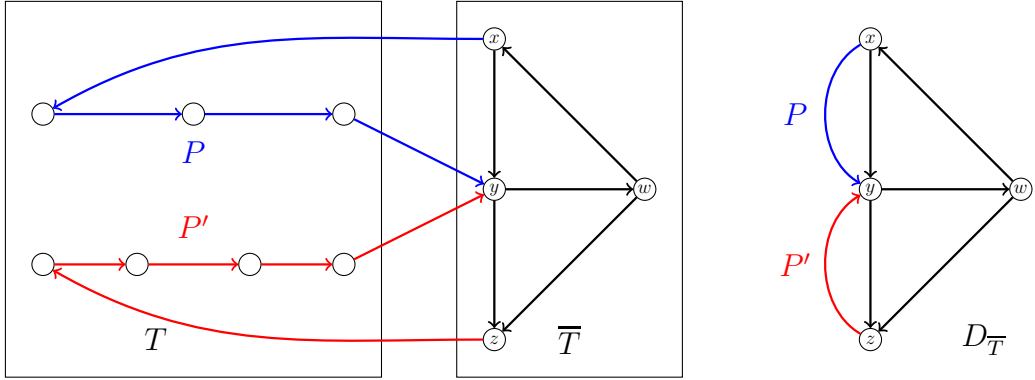
\begin{figure}[H]
\begin{center}
\tikzstyle{vertexB}=[circle,draw, minimum size=14pt, scale=0.6, inner sep=0.5pt]
\tikzstyle{vertexR}=[circle,draw, color=red!100, minimum size=14pt, scale=0.6, inner sep=0.5pt]

\begin{tikzpicture}[scale=1]
\node (x1) at (6,6) [vertexB] {$x$};
\node (y1) at (6,4) [vertexB] {$y$};
\node (z1) at (6,2) [vertexB] {$z$};
\node (w1) at (8,4) [vertexB] {$w$};
\draw [->, line width=0.03cm] (x1) to (y1);
\draw [->, line width=0.03cm] (y1) to (z1);
\draw [->, line width=0.03cm] (w1) to (z1);
\draw [->, line width=0.03cm] (w1) to (x1);
\draw [->, line width=0.03cm] (y1) to (w1);

\node (a1) at (0,5) [vertexB] {};
\node (b1) at (2,5) [vertexB] {};
\node (c1) at (4,5) [vertexB] {};
\node (a2) at (0,3) [vertexB] {};
\node (b2) at (1.25,3) [vertexB] {};
\node (c2) at (2.75,3) [vertexB] {};
\node (d2) at (4,3) [vertexB] {};

\node (x2) at (11,6) [vertexB] {$x$};
\node (y2) at (11,4) [vertexB] {$y$};
\node (z2) at (11,2) [vertexB] {$z$};
\node (w2) at (13,4) [vertexB] {$w$};
\draw [->, line width=0.03cm] (x2) to (y2);
\draw [->, line width=0.03cm] (y2) to (z2);
\draw [->, line width=0.03cm] (w2) to (z2);
\draw [->, line width=0.03cm] (w2) to (x2);
\draw [->, line width=0.03cm] (y2) to (w2);

\draw (5.5,1.5) rectangle (8.5,6.5);
\draw (-0.5,1.5) rectangle (4.5,6.5);
\node () at (1.5,2) {$T$};
\node () at (7,2) {$\overline{T}$};
\draw[->, line width=0.03cm, color=blue] (a1) to (b1);
\draw[->, line width=0.03cm, color=blue] (b1) to (c1);
\draw[->, line width=0.03cm, color=blue] (c1) to (y1);
\draw[->, line width=0.03cm, color=blue] (x1) to [out=180,in=30](a1);
\draw[->, line width=0.03cm, color=red] (a2) to (b2);
\draw[->, line width=0.03cm, color=red] (b2) to (c2);
\draw[->, line width=0.03cm, color=red] (c2) to (d2);
\draw[->, line width=0.03cm, color=red] (d2) to (y1);
\draw[->, line width=0.03cm, color=red] (z1) to [out=180,in=-30] (a2);
\draw[->, line width=0.03cm, color=red] (z2) to [out=150,in=210] (y2);
\draw[->, line width=0.03cm, color=blue] (x2) to [out=210,in=150] (y2);
\node () at (12.5,2) {$D_{\overline{T}}$};
\node () at (2,4.5) [color=blue] {$P$};
\node () at (2,3.5) [color=red] {$P'$};
\node () at (10,5) [color=blue] {$P$};
\node () at (10,3) [color=red] {$P'$};
\end{tikzpicture}

\end{center}
\caption{The left part of the figure  shows an example of the sets $T,\overline{T}$, illustrating that $\induce{D}{T}$ is a collection of vertex disjoint paths. The right part shows the extra (coloured arcs that are added when we create the digraph $D_{\overline{T}}$.}\label{Tfig}
\end{figure}

				For two distinct vertices $a, b$ of $D_{\bar{T}}$, let 
				\begin{equation}
					w_{ab} = \sum_{a \xrightarrow{P} b \in A_{\cal P}} |P|
				\end{equation}
				be the number of vertices in $T$ which lie on paths from $a$ to $b$ in $D_S$. Note that
				\begin{equation} \label{wab}
					\sum_{a,b \in \bar{T}} w_{ab} = |T|
				\end{equation}

				As there are at most $4k_1$ vertices in $D_{\bar{T}}$, there are less than $(4 k_1)^2 = 16 k_1^2 \le 16 k^2$ pairs. We also have that
				\begin{align*}
					|T| &\ge |S|-4k_1\\
					&\ge f(k) - 4k_1\\
					&\ge 32k^3 + 4 k - 4k_1\\
					&\ge 32 k^3					
				\end{align*}
				So it follows from (\ref{wab}) that there must be a pair $a,b$ that has
				\begin{align*}
					w_{ab} &\ge \frac{32 k^3}{16 k^2} \\
					&= 2 k \\
					&\ge 2 k_1
				\end{align*}

				Let ${\cal P}_{ab}\subseteq {\cal P}$ denote the set of those paths that contribute to $w_{ab}$.
				Since each $P \in {\cal P}$ has $|P| < k_1$, we have $|{\cal P}_{ab}|\ge 2$ so if we choose $P'\in {\cal P}_{ab}$ as one of these that uses the fewest vertices, then we always avoid at least $k_1$ vertices in the union of the other paths in ${\cal P}_{ab}$. Because $D_{\bar{T}}$ is strong, there is also a path $P''$ from $b$ to $a$ in $D_{\bar{T}}$ and $P''$ does not intersect any path in ${\cal P}_{ab}$. Hence $P'\cup P''$ is a cycle that avoids at least $k_1$ vertices of $S$.
				
				In conclusion we have shown that if we are in Case (\ref{manyinsingle}), then in polynomial time we can find a cycle which avoids at least $k_1$ vertices of $S$.

			\end{subcase} 
			
			\begin{subcase} 
				$\forall C \in \mathscr{C}: |N_{B,C}^+(s)| < f(k)$:
				
				As there are $h(k)$ out-neighbours of $s$ there must be at least $\frac{h(k)}{f(k)-1} = \frac{2 k \cdot f(k)}{f(k)-1} \ge 2 k$ strong components containing a neighbour of $s$. 
					From the definition of $N_B^+$, we know that $D-s$ has a subdigraph $D'$ with $\delta^-(D') \ge 1$ and at least $k_2$ vertices. We can build such a subdigraph that also leaves at least $k_1$ out-neighbours of $s$ for the out-branching as follows: Starting from an empty set $V_2' = \emptyset$, iteratively add  the vertices of a non-trivial strong component of $D-s-V_2'$ to $V_2'$ and use Algorithm \ref{grow} on $D-s$ with $S=V_2'$ and $k=k_2$ to grow it. We repeat until $|V_2'| \ge k_2$ (when adding a non-trivial strong component we may exceed $k_2$).
				
				As $s$ has an out-neighbour in at least $2 k$ strong components and at most $k_2 \le k$ strong components are added to $D_2'$ during this construction, there are still at least $2 k - k_2 \ge k \ge k_1$ strong components, containing a neighbour of $s$, that can be used to form an out-tree from $s$ of size at least $k_1$. So we can obtain a subsolution in polynomial time.
				
			\end{subcase} 

		\end{case} 
		
		\begin{case} 
			$\exists u \in V: |N_B^+(u)| > h(k)$ and $\exists! r \in V: d^-(r) = 0$:
			
			As we saw in Case \ref{case:many_neighbours_no_r}, if a vertex $s$ has $|N_B^+(s)| > h(k)$ and every other vertex has in-degree at least 1, then we have a yes-instance. 
				So the idea in this case is to start with $r$ and while $|N_B^+(r)| \le h(k)$ contract a neighbour $u \in N_B^+(r)$ into $r$ and recompute $B$ and $N_B^+(r)$. 
			If we reach $N_B^+(r) = \emptyset$ before contracting $k_1-1$ times, then we backtrack and try contracting another neighbour in $N_B^+(r)$, until we have tried all possible out-trees with $k_1$ vertices from $r$, similar to case \ref{case:all_possible_outbranchings}. If we instead reach $|N_B^+(r)| > h(k)$ then the problem is reduced to case \ref{case:many_neighbours_no_r}, and we have a yes-instance.
			
			Because we reduce to Case \ref{case:all_possible_outbranchings} and \ref{case:many_neighbours_no_r} in polynomial time, this case is also solvable in FPT time.
			
		\end{case} 
		
		So we have shown how to find the correct answer in all cases. We also argued that it was possible in FPT time in every case. This concludes the proof.
		
	\end{proof}

	From the proof in Case \ref{case:many_neighbours_no_r} we obtain the following.
	\begin{corollary}
		Let $(D=(V,A), k_1, k_2)$ be an instance of the {\sc $(B^+,\delta^- \ge 1)$-$[k_1,k_2]$-partition}-problem, with $\delta^-(D) \ge 1$. Let $B, N_B^+$ be defined as in the beginning of the proof of Theorem \ref{mainthm} and let $k = \max\{k_1, k_2\}$. If there is a vertex $s \in V$ with $|N_B^+(s)| \ge 64 k^4 + 8 k^2$, then we can find a good 2-partition in polynomial time.
	\end{corollary}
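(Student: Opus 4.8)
\section*{Proof proposal for the Corollary}

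The plan is to observe that this corollary is essentially a repackaging of Case~\ref{case:many_neighbours_no_r} in the proof of Theorem~\ref{mainthm}. The first step is the bookkeeping identity $64k^4+8k^2 = 2k(32k^3+4k) = 2k\cdot f(k) = h(k)$, so the hypothesis $|N_B^+(s)|\ge 64k^4+8k^2$ is exactly the hypothesis $|N_B^+(s)|\ge h(k)$ of that case, the only cosmetic difference being ``$\ge$'' in place of ``$>$''. Since $\delta^-(D)\ge 1$, the forced root $r$ does not exist, so we are precisely in the situation addressed by Case~\ref{case:many_neighbours_no_r}.

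Next I would re-run the argument of Case~\ref{case:many_neighbours_no_r} verbatim, taking $s$ to be the chosen root of the out-branching. This splits into Subcase~\ref{manyinsingle}, where some strong component $C$ of $D-s$ has $|N_{B,C}^+(s)|\ge f(k)$, and the complementary subcase where every strong component of $D-s$ contains fewer than $f(k)$ out-neighbours of $s$. In the first subcase the argument produces, in polynomial time, either a subsolution directly or a cycle in $C$ avoiding at least $k_1$ vertices of $N_{B,C}^+(s)$ (via the auxiliary digraphs $D_S$, $D_{\bar T}$, Theorem~\ref{girth}, and the pigeonhole bound on the $w_{ab}$), which in turn yields a subsolution. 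In the second subcase one gets at least $h(k)/(f(k)-1) > 2k$ strong components of $D-s$ containing an out-neighbour of $s$, and the greedy construction described there (repeatedly absorbing a non-trivial strong component and applying Algorithm~\ref{grow}) builds a subsolution in polynomial time. In every case we thus obtain, in polynomial time, a subsolution $(V_1',V_2')$ with $s\in V_1'$.

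Finally I would invoke Lemma~\ref{sub_solution}: a subsolution extends to a good $2$-partition of all of $V$ in polynomial time (the side condition ``$r\in V_1'$'' is vacuous here since $r$ does not exist). Composing the two polynomial-time steps gives the claimed polynomial-time construction. The only point requiring genuine care --- and the sole ``obstacle'' --- is checking that nothing in Case~\ref{case:many_neighbours_no_r} really used the strict inequality $|N_B^+(s)|>h(k)$ rather than $|N_B^+(s)|\ge h(k)$; this reduces to re-inspecting the handful of numerical inequalities in the two subcases ($f(k)\ge k_1$, $f(k)\ge k_2$, $f(k)-k_2\ge k_1$, $|S|>2k_1$, the $>2k$ component count, etc.), each of which has ample slack, so the verification is routine.
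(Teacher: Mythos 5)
Your proposal is correct and takes essentially the same route as the paper, which offers no separate proof but simply notes that the corollary follows from the argument of Case~\ref{case:many_neighbours_no_r}; your identity $64k^4+8k^2 = 2k\,f(k) = h(k)$ and your check that the inequalities in both subcases survive the change from $>h(k)$ to $\ge h(k)$ are exactly the (routine) verifications the authors leave implicit.
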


	The {\sc $(B^-,\delta^+ \ge 1)$-$[k_1,k_2]$-partition}-problem is the analogoue of the {\sc $(B^+,\delta^- \ge 1)$-$[k_1,k_2]$-partition}-problem where we want an in-branching in one set of the partition while the other induces a digraph of minimum out-degree at least 1.
		By considering the digraph that we obtain by reversing all arcs we see that the following holds.

	\begin{corollary}
		The {\sc $(B^-,\delta^+ \ge 1)$-$[k_1,k_2]$-partition}-problem is FPT.
	\end{corollary}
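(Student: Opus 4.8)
The plan is to reduce the $(B^-,\delta^+ \ge 1)$-$[k_1,k_2]$-partition problem to the $(B^+,\delta^- \ge 1)$-$[k_1,k_2]$-partition problem and then invoke Theorem \ref{mainthm}. Given an instance $(D=(V,A),k_1,k_2)$ of the former, I would first construct the reverse digraph $\overleftarrow{D}=(V,\overleftarrow{A})$ obtained from $D$ by reversing every arc, i.e. $\overleftarrow{A}=\{\, vu \mid uv\in A\,\}$. This can be done in linear time, and since $V(\overleftarrow{D})=V$ the cardinality constraints $|V_i|\ge k_i$ are unaffected by the transformation; only which of the two parts must host the branching and which must have bounded degree gets swapped.

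The next step is to check that reversal exchanges precisely the two structural requirements. For any $U\subseteq V$ we have $\overleftarrow{D}[U]=\overleftarrow{D[U]}$, so an out-branching of $\overleftarrow{D}[U]$ is exactly (the arc-reversal of) an in-branching of $D[U]$; hence $D[U]$ has an in-branching if and only if $\overleftarrow{D}[U]$ has an out-branching. Similarly $d^+_{D[U]}(v)=d^-_{\overleftarrow{D}[U]}(v)$ for every $v\in U$, so $\delta^+(D[U])\ge 1$ if and only if $\delta^-(\overleftarrow{D}[U])\ge 1$. Combining the two equivalences, a 2-partition $(V_1,V_2)$ of $V$ witnesses that $(D,k_1,k_2)$ is a yes-instance of the $(B^-,\delta^+\ge1)$-$[k_1,k_2]$-partition problem if and only if the very same partition $(V_1,V_2)$ witnesses that $(\overleftarrow{D},k_1,k_2)$ is a yes-instance of the $(B^+,\delta^-\ge1)$-$[k_1,k_2]$-partition problem.

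Finally I would run the FPT algorithm guaranteed by Theorem \ref{mainthm} on the instance $(\overleftarrow{D},k_1,k_2)$ and output its answer (and, if desired, the partition it returns, after reversing the roles of the two parts back). The running time is the linear cost of forming $\overleftarrow{D}$ plus an $O(f(k)\,n^{c})$ term from that algorithm, with $k=\max\{k_1,k_2\}$, so the whole procedure is FPT. I do not expect any genuine obstacle here: the entire content sits in Theorem \ref{mainthm}, and the only thing that needs care is the routine verification in the second paragraph that arc-reversal really does interchange ``has an in-branching'' with ``has an out-branching'' and $\delta^+$ with $\delta^-$ at the level of induced subdigraphs.
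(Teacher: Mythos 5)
Your proposal is correct and is exactly the paper's argument: the paper also derives this corollary by reversing all arcs and applying Theorem \ref{mainthm}, merely stating the reduction in one sentence where you spell out the routine verification. Nothing further is needed.
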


	\section{Remarks and open problems}

	If we relax the condition of having an out-branching to that of just being connected, we obtain the following problem.

	\defproblem{{\sc $(connected, \delta^- \ge 1)$-$[k_1,k_2]$-partition}}{A digraph $D=(V,A)$ and natural numbers $k_1,k_2$}{Is there a 2-partition $(V_1,V_2)$ of $V$ such that $D[V_1]$ is connected, $\delta^-(D[V_2])\geq 1$ and $|V_i| \ge k_i$ for $i = 1,2$?}\\

	\begin{theorem} \cite{bangTCS640} \label{conIn1}
		The {\sc $(connected, \delta^- \ge 1)$-$[k_1,k_2]$-partition}-problem \jba{(for fixed $k_1,k_2$)} is NP-complete for general digraphs and polynomially solvable for strong digraphs.
	\end{theorem}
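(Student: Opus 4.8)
Membership in NP is immediate: given a candidate 2-partition $(V_1,V_2)$ one checks in polynomial time that $U(D[V_1])$ is connected, that $\delta^-(D[V_2])\ge 1$, and that $|V_i|\ge k_i$. So the whole content of the theorem is the two reductions/algorithms, and the two halves are essentially independent.

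For NP-hardness on general digraphs (with $k_1,k_2$ fixed) the plan is to reduce from $3$-SAT along the following lines. For a formula $\varphi$ on variables $x_1,\dots,x_n$ with clauses $c_1,\dots,c_m$, build a digraph $D_\varphi$ containing a \emph{backbone}: a directed path $P$ through designated variable vertices, plus a vertex $z$ with $d^-(z)=0$ and an arc from $z$ into the start of $P$. Since $z$ has in-degree $0$ it cannot lie in $V_2$, and arranging the underlying adjacencies so that $z$ and $P$ only attach to the rest of the graph through the variable gadgets forces $P\subseteq V_1$; this is what both makes $k_1$ irrelevant and, crucially, rules out the degenerate ``solution'' in which $V_1$ is a tiny connected set and $V_2$ is almost everything. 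Each variable $x_i$ gets a gadget with two ``literal blocks'' $T_i,F_i$, each containing a $2$-cycle (so that either block can be absorbed into $V_2$ while keeping $\delta^-\ge1$), glued to the backbone by a vertex that can be covered by at most one of the two sides; the constraint $\delta^-(D[V_2])\ge1$ together with the connectivity of $V_1$ then forces an exclusive true/false choice. Each clause $c_j$ gets a vertex $q_j$ whose only potential in-neighbours inside $V_2$ are the satisfying-side literal blocks of its three literals; forcing $q_j\in V_2$ (again via an in-degree trick) makes at least one literal true. Routine bookkeeping yields a valid partition iff $\varphi$ is satisfiable. The main obstacle here is calibrating the gadgets so that the two \emph{global} constraints --- connectivity of $V_1$, and every vertex of $V_2$ having an in-neighbour inside $V_2$ --- interact to give a clean equivalence, and in particular eliminating all ``small $V_1$'' cheats, which is exactly the role of the forced source $z$ and forced backbone.

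For the polynomial algorithm on strong digraphs, first note that a strong digraph has no vertex of in-degree $0$ and that $n=|V|\ge k_1+k_2$ is necessary. The plan is to use the reformulation that $(D,k_1,k_2)$ is a yes-instance iff there is a set $V_2$ with $\delta^-(D[V_2])\ge1$, $k_2\le|V_2|\le n-k_1$, and $D-V_2$ has a connected component of size at least $k_1$; the equivalence holds because, in a strong digraph (no sources), one may absorb all but one component of $D-V_2$ into $V_2$ without destroying $\delta^-\ge1$. Now fix an out-branching $T$ rooted at some vertex $r$ (possible since $D$ is strong) and look for $V_2$ as a union of vertex-disjoint rooted subtrees of $T$ each of whose roots receives an arc of $D$ from within $V_2$: deleting such subtrees leaves a connected subtree containing $r$, so $D-V_2$ automatically has one large component, and the subtree structure gives $\delta^-(D[V_2])\ge1$ for free except at the chosen roots, which is exactly the back-arc condition. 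Because $D$ is strong there is always a back arc into $r$ from a descendant; by trying each vertex as the root $r$ and growing $V_2$ greedily from a shortest cycle (or such a back arc), one argues that either a $V_2$ with $|V_2|$ landing in the window $[k_2,n-k_1]$ is produced in polynomial time, or the instance lies in a small, structurally constrained family of exceptions --- essentially ``bidirected star''-type digraphs in which every cyclic induced subdigraph shatters the remainder --- which can be tested directly. The hard part is the case analysis of these exceptions and showing that outside them the greedy enlargement of the cyclic core of $V_2$ can always be steered to keep $|V_2|$ in the required window while maintaining a single large complementary component.
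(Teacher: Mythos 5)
This statement is quoted from \cite{bangTCS640}; the paper you are reading gives no proof of it, so there is nothing internal to compare your argument against. Judged on its own terms, your proposal is an outline in which the decisive steps of both halves are named as difficulties rather than resolved, so it does not yet constitute a proof.

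For the NP-hardness half, the forced-source trick (a vertex $z$ with $d^-(z)=0$ must lie in $V_1$, and a path whose vertices have no in-neighbours outside the path cascades into $V_1$) is sound and does pin down a backbone. But the actual content of the reduction --- how the literal blocks attach to the backbone, how a clause vertex $q_j$ is forced into $V_2$, and above all how connectivity of $V_1$ and the in-degree condition on $V_2$ jointly force an \emph{exclusive} truth assignment at each variable --- is exactly what you defer with ``calibrating the gadgets'' and ``routine bookkeeping.'' Until a concrete gadget is written down and both directions of the equivalence are checked (including that no unintended connected $V_1$ containing $z$ can cherry-pick pieces of several gadgets), there is no reduction. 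For the algorithmic half, your reformulation (find $V_2$ with $\delta^-(D[V_2])\ge 1$, $|V_2|\ge k_2$, and a component of $D-V_2$ of size $\ge k_1$, then absorb the other components) is correct for strong digraphs, but the proposed search is not: restricting $V_2$ to a union of rooted subtrees of one out-branching is a loss of generality you never justify, the ``small, structurally constrained family of exceptions'' is never defined, and the claim that greedy growth can be steered to land $|V_2|$ in the required window is unsupported --- this last point is precisely where such arguments usually break. Note also that the theorem only claims membership in XP for fixed $k_1,k_2$, so you are allowed to enumerate all candidate sets of bounded size (as in Case~\ref{case:all_possible_outbranchings} of Theorem~\ref{mainthm}) and reserve the structural argument for the dense case; your sketch never uses this freedom, which makes the unresolved case analysis harder than it needs to be.
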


	\begin{theorem}
		The {\sc $(connected, \delta^- \ge 1)$-$[k_1,k_2]$-partition}-problem is FPT for digraphs with minimum in-degree at least 1.
	\end{theorem}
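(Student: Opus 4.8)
The plan is to follow the architecture of the proof of Theorem~\ref{mainthm}, exploiting that requiring $D[V_1]$ to be connected is weaker and far more flexible than requiring it to have an out-branching. Because $\delta^-(D)\ge 1$, the vertex $r$ never exists, so there is no analogue of the last case of Theorem~\ref{mainthm}; we may also assume $k_1,k_2\ge 1$, since $\min\{k_1,k_2\}=0$ is handled exactly as there. Keep $k=\max\{k_1,k_2\}$ and the functions $f,h$ from the proof of Theorem~\ref{mainthm}. Call an adjacent pair $\{u,v\}$ a \emph{connectable edge} if $D-\{u,v\}$ has a subdigraph of size at least $k_2$ with minimum in-degree at least $1$; by Algorithm~\ref{remove_0_degree} the set $B$ of connectable edges is computable in polynomial time, and for $u\in V$ we set $N_B(u)=\{v:\{u,v\}\in B\}$.

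First I would record the reduction to finding a subsolution, mirroring Lemma~\ref{sub_solution}. In any good partition $(V_1,V_2)$ every edge inside $V_1$ is connectable, witnessed by $V_2$, so $V_1$ induces a connected subgraph of the graph $(V,B)$; conversely, given any $W\subseteq V$ with $|W|\ge k_1$ that induces a connected subgraph of $(V,B)$ and with $\mathrm{trim}(D-W)$ (the subdigraph returned by Algorithm~\ref{remove_0_degree} on $D-W$) of size at least $k_2$, I would grow $W$ inside the underlying graph of $D-\mathrm{trim}(D-W)$ to the connected set $W'$ it lies in, and check — using $\delta^-(D)\ge 1$, exactly as in Lemma~\ref{sub_solution} — that $(W',V\setminus W')$ is good: a leftover vertex $v\notin W'$ has an in-neighbour in $D$, which cannot lie in $W'$ (it would have pulled $v$ into $W'$) unless $v$ already had an in-neighbour in $\mathrm{trim}(D-W)\subseteq V\setminus W'$. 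Hence it suffices to decide whether such a set $W$ exists, and taking a $k_1$-vertex subtree of a spanning tree of a good $V_1$ shows this is equivalent to being a yes-instance.

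Next I would split on $\max_u|N_B(u)|$. If this is at most $h(k)$, then the connected $k_1$-vertex subsets of $(V,B)$ can be enumerated in FPT time by exactly the counting used for out-trees in Case~\ref{case:all_possible_outbranchings} (at most $n$ roots, each growing vertex choosing a subset of its at most $h(k)$ incident $B$-edges), and for each candidate $W$ a single run of Algorithm~\ref{remove_0_degree} on $D-W$ decides whether $(W,V(D'))$ is a subsolution. In the remaining case some vertex $s$ has $|N_B(s)|>h(k)$, and I would show $(D,k_1,k_2)$ is a yes-instance by rerunning Case~\ref{case:many_neighbours_no_r} almost verbatim. The only use of the out-branching there is through out-stars $\{s\}\cup L$ with $L\subseteq N_B^+(s)$; in the connected setting I replace these by ordinary stars $\{s\}\cup L$ with $L$ an arbitrary subset of $N_B(s)$, which are connected no matter how $L$ is chosen, so every step only becomes easier. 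Concretely, writing $S=N_B(s)$: if at least $k_1$ vertices of $S$ lie outside $\mathrm{trim}(D-s)$ we finish at once, and otherwise $\mathrm{trim}(D-s)$ contains all but fewer than $k_1$ vertices of $S$, after which one runs through the two subcases — the vertices of $S$ meeting many strong components of $D-s$, versus at least $f(k)$ of them lying in a single strong component $C$ — producing either a suitable union of strong components or, via the auxiliary digraph $D_S$, Shen's Theorem~\ref{girth}, and the multigraph $D_{\bar T}$, a cycle of size at least $k_2$ and minimum in-degree $1$ that avoids at least $k_1$ vertices of $S$; a star from $s$ on the avoided vertices completes a subsolution. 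The input that analysis needs — that $D-\{s,u\}$ has a subdigraph of size at least $k_2$ with $\delta^-\ge 1$ for each $u\in S$ — is exactly the definition of $B$, so nothing has to be reproved.

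The main obstacle is purely bookkeeping: verifying that every appeal in Case~\ref{case:many_neighbours_no_r} to ``$s$ reaches $k_1-1$ more vertices by arcs'' can be replaced by ``$s$ is adjacent to $k_1-1$ more vertices'', that the auxiliary reductions (contracting trivial initial strong components of $D-s$ into $s$, and passing to $\mathrm{trim}(D-s)$) still go through, and that $f(k),h(k)$ remain large enough — which they do, since the star version only ever demands that the retained part of $S$ have size $k_1-1$, a strictly weaker requirement than in the out-branching case. A clean way to organise the final write-up is to isolate the shared core — given $s$ with $|N_B(s)|>h(k)$ and the defining property of $B$, one finds in polynomial time a subdigraph of $D-s$ of size at least $k_2$ with $\delta^-\ge 1$ avoiding at least $k_1$ vertices of $N_B(s)$ — as a lemma used in the proofs of both Theorem~\ref{mainthm} and this theorem.
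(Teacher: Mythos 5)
Your proposal is correct and follows essentially the same route as the paper: the paper's own proof consists of the same extension observation for subsolutions (using $\delta^-(D)\ge 1$) followed by the remark that the rest is an easy modification of the proof of Theorem~\ref{mainthm}, with details left to the reader. Your write-up is precisely that modification, carried out correctly (connectable edges in place of branchable arcs, stars in place of out-stars, and the unchanged $D_S$/Shen/$D_{\bar T}$ core).
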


	\begin{proof}
		First observe that if $D=(V,A)$ has minimum indegree at least 1 and $V'_1,V'_2$ are disjoint sets such that 
			$|V'_i|\geq k_i$, $D[V'_1]$ is connected and $\delta^-(D[V'_2])\geq 1$, then we can easily extend this to a 2-partition $(V_1,V_2)$ of $V$ with $V'_i\subseteq V_i$, $i=1,2$ where $D[V_1]$ is connected and $\delta^-(D[V_2])\geq 1$. Hence it suffices to show that we can find a subsolution, if one exists, in FPT time. The proof of this is an easy modification of the proof of Theorem \ref{mainthm}. We leave the details to the interested reader.
	\end{proof}
	
	The theorem also holds for the analogous {\sc $(connected, \delta^+ \ge 1)$-$[k_1,k_2]$-partition}-problem, for digraphs with minimum out-degree at least 1.


	\end{document}